\newtheorem{theorem}{Theorem}
\newtheorem{definition}{Definition}
\newcommand{\var}{\textit{var}}
\newcommand{\sig}{\mathit{sig}}
\newtheorem{lemma}[theorem]{Lemma}
\newcommand{\FO}{\ensuremath{\mathsf{FO}}}
\newcommand{\FOT}{\ensuremath{\mathsf{FO^2}}}
\newcommand{\CT}{\ensuremath{\mathsf{C^2}}}
\newcommand{\FOTOE}{\ensuremath{\mathsf{FO^2 1E}}}
\newcommand{\FOTTE}{\ensuremath{\mathsf{FO^2 2E}}}
\newcommand{\GFTTEc}{\ensuremath{\mathsf{GF^2 2E_c}}}
\renewcommand{\ALC}{\ensuremath{\mathcal{ALC}}}
\newcommand{\ALCTE}{\ensuremath{\mathcal{ALC}^{\cap,\neg,\textit{id}}}\mathsf{2E}}
\renewcommand{\P}{\mathcal{P}}
\newcommand{\chv}{\overline{\boldsymbol{v}}}
\newcommand{\chw}{\overline{\boldsymbol{w}}}
\renewcommand{\abf}{\ensuremath{\boldsymbol{a}}\xspace}
\renewcommand{\bbf}{\ensuremath{\boldsymbol{b}}\xspace}
\renewcommand{\xbf}{\ensuremath{\boldsymbol{x}}\xspace}
\renewcommand{\ybf}{\ensuremath{\boldsymbol{y}}\xspace}
\newcommand{\bis}{\boldsymbol{\beta}}
\begin{document}



\title{Interpolant Existence is Undecidable for Two-Variable First-Order Logic with Two Equivalence Relations}



\author[1]{Frank Wolter}[%
email=wolter@liverpool.ac.uk,
]
\address[1]{University of Liverpool, Ashton Street, Liverpool L69 3BX, UK}

\author[2]{Michael Zakharyaschev}[%
email=m.zakharyaschev@bbk.ac.uk,
]
\address[3]{Birkbeck, University of London, Malet Street, London WC1E 7HX, UK}

\begin{abstract}
The interpolant existence problem (IEP) for a logic $L$ is to decide, given formulas $\varphi$ and $\psi$, whether there exists a formula $\iota$, built from the shared symbols of $\varphi$ and $\psi$, such that $\varphi$ entails $\iota$ and $\iota$ entails $\psi$ in $L$. If $L$ enjoys the Craig interpolation property (CIP), then the IEP reduces to validity in $L$. Recently, the IEP has been studied for logics without the CIP. The results obtained so far indicate that even though the IEP can be computationally harder than validity, it is decidable when $L$ is decidable. Here, we give the first examples of decidable fragments of first-order logic for which the IEP is undecidable. Namely, we show that the IEP is undecidable for the two-variable fragment with two equivalence relations and for the two-variable guarded fragment with individual constants and two equivalence relations. We also determine the corresponding decidable Boolean description logics for which the IEP is undecidable.  
\end{abstract}


\begin{keywords}
Craig interpolant, interpolant existence problem, two-variable fragment of first-order logic.
\end{keywords}

\maketitle

\section{Introduction}\label{intro}

A first-order (\FO) formula $\iota$ is called a Craig interpolant for \FO-formulas $\varphi$ and $\psi$ if $\varphi\models \iota \models \psi$ and $\iota$ is built from the shared non-logical symbols of $\varphi$ and $\psi$. Interpolants have been applied in many areas ranging from formal  verification~\cite{DBLP:conf/cav/McMillan03} and software specification~\cite{diaconescu1993logical} to theory combinations~\cite{DBLP:conf/cade/CimattiGS09,CalEtAl20,DBLP:journals/jar/CalvaneseGGMR22} and query reformulation and rewriting in databases~\cite{DBLP:series/synthesis/2011Toman,DBLP:series/synthesis/2016Benedikt}. 

For many fragments $L$ of \FO, including many description logics (DLs), the existence of a Craig interpolant for $\varphi$ and $\psi$ is guaranteed if $\varphi$ entails $\psi$.  This phenomenon is known as  the Craig interpolation property (CIP) of $L$. The CIP holds, for instance, for the DLs $\mathcal{ALC}$, $\mathcal{ALCI}$, and $\mathcal{ALCQI}$~\cite{TenEtAl13}, and also for the guarded negation fragment of \FO~\cite{DBLP:journals/tocl/BenediktCB16}. In this case, the problem to decide whether an interpolant
for $\varphi$ and $\psi$ exists reduces to showing that $\varphi$ entails $\psi$ in $L$, and so is not harder than entailment. Moreover, interpolants can often be extracted from a proof of the entailment. 

An important consequence of the CIP of a logic $L$ is the projective Beth definability property (BDP) of $L$: if a relation is implicitly definable over a signature $\varrho$ in $L$, then it is explicitly definable over $\varrho$ in $L$. The BDP is used in ontology engineering for extracting explicit definitions of concepts or nominals from ontologies~\cite{TenEtAl06,TenEtAl13,ArtEtAl21}, developing and maintaining ontology alignments~\cite{DBLP:conf/ekaw/GeletaPT16}, and  for robust  modularisations and decompositions of ontologies~\cite{DBLP:series/lncs/KonevLWW09,DBLP:conf/rweb/BotoevaKLRWZ16}. Explicit definitions are used in ontology-based data management to equivalently rewrite ontology-mediated
queries~\cite{DBLP:conf/ijcai/SeylanFB09,TomanWed20,FraEtAl13,FraKer19,TomWed21}.

Unfortunately, there are many prominent fragments of \FO{} that do not enjoy the CIP and BDP: for example, DLs with nominals such as $\mathcal{ALCO}$ and/or role inclusions such as $\mathcal{ALCH}$~\cite{TenEtAl13}, the guarded and the two-variable fragment of \FO~\cite{comer1969,Pigozzi71,DBLP:journals/ndjfl/MarxA98}.  
To extend interpolation-based techniques to such formalisms, it has recently been suggested to investigate the interpolant existence problem (IEP) for logics $L$ without the CIP: given formulas $\varphi$ and $\psi$ as an input, decide whether they have a Craig interpolant in $L$. It has been shown that the IEP is decidable for many DLs with nominals and role inclusions~\cite{DBLP:journals/tocl/ArtaleJMOW23}, Horn-DLs~\cite{DBLP:journals/corr/abs-2202-07186}, the two-variable and guarded fragments of \FO~\cite{DBLP:conf/lics/JungW21}, and some decidable fragments of first-order modal logics~\cite{DBLP:conf/kr/KuruczWZ23}. These positive results naturally lead to a conjecture that the IEP for $L$ is decidable whenever the entailment in $L$ is decidable.  

In this paper, we disprove this conjecture by showing that the IEP is undecidable for the two-variable \FO{} with two equivalence relations and also for the two-variable guarded fragment of \FO{} with constants and two equivalence relations. In the realm of Description Logic, the former result means that interpolant existence is undecidable for concept inclusions in $\ALC$ enriched with Boolean operators on roles and the identity role. As a consequence, we also obtain the undecidability of the explicit definition existence problem (EDEP) for each of these logics $L$, which is the problem of deciding, given formulas $\varphi$, $\psi$ and a signature $\varrho$, whether there exists an explicit $\varrho$-definition of $\psi$ modulo $\varphi$ in $L$. (It is known~\cite{DBLP:conf/kr/KuruczWZ23} that the IEP and EDEP are polynomially reducible to each other.) 

The paper is structured as follows. Section~\ref{prelims} introduces the fragments of \FO{} we are interested in and summarises the necessary technical tools and results. Section~\ref{undec} proves the undecidability of the IEP for the two-variable \FO{} with two equivalence relations by reduction of the infinite Post Correspondence Problem. Section~\ref{other} shows how this result can be adapted to the two-variable guarded fragment with constants and two equivalence relations and to suitable description logics. Finally, Section~\ref{open} discusses a few challenging open problems.



\section{Preliminaries}\label{prelims}


Let $\sigma$ be a signature of individual constants and \emph{unary and binary} predicate symbols, including two distinguished binary \emph{equivalence predicates} $E_1$, $E_2$. Let $\var$ be a set comprising two individual variables. 
We consider the following fragments of first-order logic $\FO(\sigma)$ over $\sigma$:
\begin{description}
\item[$\FOTTE(\sigma)$] is the set of \emph{constant-free} $\FO$-formulas constructed in the usual way from atoms $x=y$ with $x,y \in \var$ and $R(\xbf)$, for any $n$-ary predicate symbol $R \in \sigma$ and  $n$-tuple $\xbf$ of variables from $\var$\,;


\item[$\GFTTEc(\sigma)$] admits individual constants in atoms but restricts quantification to the patterns 
$$ 
\forall
\ybf\, \big( \alpha(\xbf,\ybf)\rightarrow \varphi(\xbf,\ybf) \big), \qquad 
\exists \ybf \, \big( \alpha(\xbf,\ybf)\wedge \varphi(\xbf,\ybf) \big), 
$$
where $\varphi(\xbf,\ybf)$ is a $\GFTTEc$-formula and $\alpha(\xbf,\ybf)$ is an atom with occurrences of $\xbf,\ybf$.

\end{description}
The \emph{signature} $\sig(\psi)$ of a formula $\psi$ in any of these logics is the set of predicate and constant symbols occurring in $\psi$. 
Formulas are interpreted in $\sigma$-\emph{structures} $\Amf= \big( \text{dom}(\Amf),(R^{\Amf})_{R\in \sigma},(c^{\Amf})_{c\in \sigma}) \big)$ with a domain $\text{dom}(\Amf) \ne \emptyset$, relations $R^{\Amf}$ on $\text{dom}(\Amf)$ of the same arity as predicates $R \in \sigma$, and $c^{\Amf} \in \text{dom}(\Amf)$. It is always assumed that $E^{\Amf}_1$ and $E^{\Amf}_2$ are \emph{equivalence relations} on $\text{dom}(\Amf)$. 
A \emph{pointed structure} is a pair $\Amf,\abf$ with a tuple $\abf=(a_1,\dots,a_n)$, $n \le 2$, of elements from $\text{dom}(\Amf)$.  

The decision problems for $\FOTTE$ and $\GFTTEc$ are known to be \coTwoNExpTime- and \TwoExpTime-complete, respectively; see~\cite{DBLP:journals/siamcomp/KieronskiMPT14,Pratt23book} and references therein.

\begin{definition}\em
Let $L \in \{\FOTTE, \GFTTEc\}$, let $\varphi(\xbf)$ and $\psi(\xbf)$ be $L(\sigma)$-formulas with the same free variables $\xbf$, and let $\varrho = \sig(\varphi) \cap \sig(\psi)$. An $L(\varrho)$-formula $\iota(\xbf)$ is called an $L$-\emph{interpolant} for $\varphi$ and $\psi$ if $\varphi(\xbf) \models \iota(\xbf)$ and $\iota(\xbf) \models \psi(\xbf)$. 
\end{definition}

Our concern here is the following \emph{$L$-interpolant existence problem} ($L$-\emph{IEP}, for short):
\begin{description}
\item[($\boldsymbol L$-\emph{IEP})] given $L$-formulas $\varphi(\xbf)$ and $\psi(\xbf)$, decide whether they have an $L$-interpolant $\iota(\xbf)$.
\end{description}

We remind the reader of a well-known criterion of interpolant existence in terms of appropriate bisimulations. Let $\varrho \subseteq \sigma$. Given pointed $\sigma$-structures $\Amf,\abf$ and $\Bmf,\bbf$, we call $\Amf,\abf$ and $\Bmf,\bbf$ \emph{$L(\varrho)$-equivalent} and write $\Amf,\abf \equiv_{L, \varrho} \Bmf,\bbf$ if $\Amf\models \varphi(\abf)$ iff $\Bmf\models \varphi(\bbf)$, for all $L(\varrho)$-formulas $\varphi$.
%
%
\begin{definition}\label{deftwo}\em 
A relation $\bis \subseteq \text{dom}(\Amf)\times\text{dom}(\Bmf)$ is an \emph{$\FOTTE(\varrho)$-bisimulation between \Amf and \Bmf} 
if $\bis$ is \emph{global} in the sense that $\text{dom}(\Amf)\subseteq \{a\mid (a,b)\in \bis\}$ and $\text{dom}(\Bmf)\subseteq \{b\mid (a,b)\in \bis\}$ and the following conditions are satisfied for all $(a,b)\in \bis$:
\begin{enumerate}
\item for every $a'\in\text{dom}(\Amf)$, there is a $b'\in \text{dom}(\Bmf)$ such that $(a',b')\in \bis$ and $(a,a')\mapsto (b,b')$ is a partial  $\varrho$-isomorphism between \Amf and \Bmf;
	
\item for every $b'\in\text{dom}(\Bmf)$, there is a $a'\in \text{dom}(\Amf)$ such that $(a',b')\in \bis$ and $(a,a')\mapsto (b,b')$ is a partial  $\varrho$-isomorphism between \Amf and \Bmf.
\end{enumerate}
\end{definition}

\begin{definition}\em 
A global relation $\bis\subseteq \text{dom}(\Amf)\times\text{dom}(\Bmf)$ is a \emph{$\GFTTEc(\varrho)$-bisimulation between \Amf and \Bmf}
if $(c^{\Amf},c^{\Bmf})\in \bis$ for all $c\in \varrho$ and the following conditions are satisfied for all $(a,b)\in \bis$:
\begin{enumerate}
\item for every $a'\in\text{dom}(\Amf)$ such that either $R^{\Amf}(a,a')$, for some binary $R\in\varrho$, or $a'=a$, condition 1 of Definition~\ref{deftwo} holds;
	\item for every $b'\in\text{dom}(\Bmf)$ such that either $R^{\Bmf}(b,b')$, for some binary $R\in\varrho$, or $b'=b$, condition 2 of Definition~\ref{deftwo} holds.
\end{enumerate}
\end{definition} 

For tuples $\abf=(a_1,\dots,a_n)$ and $\bbf=(b_1,\dots,b_n)$ with $n \le 2$, we write $\Amf,\abf \sim_{L,\varrho} \Bmf,\bbf$ if 
$\abf\mapsto\bbf$ is a partial $\varrho$-isomorphism between \Amf and
\Bmf and there is an $L(\varrho)$-bisimulation $\bis$ between $\Amf$ and
\Bmf such that $(a_i,b_i)\in \bis$, for all $i\leq n$.
The following characterisation is well known; see, e.g.,~\cite{goranko20075,DBLP:books/daglib/p/Gradel014,Pratt23book}: 

\begin{lemma}\label{lem:guardedbisim} 
Let $L \in \{\FOTTE, \GFTTEc\}$ and $\varrho \subseteq \sigma$. For any pointed $\sigma$-structures $\Amf,\abf$ and $\Bmf,\bbf$ with $|\abf| = |\bbf|$,
$$
\Amf,\abf \sim_{L,\varrho} \Bmf,\bbf \quad \text{ implies } \quad
\Amf,\abf \equiv_{L,\varrho} \Bmf,\bbf
$$
and, conversely, if structures $\Amf$ and $\Bmf$ are $\omega$-saturated, then
$$
\Amf,\abf \equiv_{L,\varrho} \Bmf,\bbf \quad \text{ implies } \quad 
\Amf,\abf \sim_{L,\varrho} \Bmf,\bbf.
$$
\end{lemma}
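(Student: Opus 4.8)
The plan is to prove the two implications separately by the standard method for bisimulation characterisations: the first (soundness) is an induction on formulas, and the converse is a Hennessy--Milner style argument that exploits $\omega$-saturation. I would carry out the $\FOTTE$ case in detail and then indicate the changes needed for $\GFTTEc$.

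For the first implication I would fix an $\FOTTE(\varrho)$-bisimulation $\bis$ between $\Amf$ and $\Bmf$ realising $\Amf,\abf \sim_{\FOTTE,\varrho}\Bmf,\bbf$ and prove, by induction on $\varphi$, the following strengthening: whenever a partial $\varrho$-isomorphism sends $(a_1,\dots,a_k)$ to $(b_1,\dots,b_k)$ with $k\le 2$ and each $(a_i,b_i)\in\bis$, then $\Amf\models\varphi(a_1,\dots,a_k)$ iff $\Bmf\models\varphi(b_1,\dots,b_k)$. The atomic case is exactly the partial-isomorphism hypothesis, and the Boolean cases are immediate. For $\exists y\,\varphi$, an existential witness $a'$ in $\Amf$ is matched, via the forth condition applied to the element currently occupying the variable slot that $y$ reuses, by some $b'$ with $(a',b')\in\bis$ for which the updated pair is again a partial $\varrho$-isomorphism; the induction hypothesis then transfers $\varphi$, and the universal case is dual. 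The point of the two-variable restriction is precisely that quantification never has to remember more than one previously bound element, so the one-step back-and-forth over pairs suffices.

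For the converse I assume $\Amf$ and $\Bmf$ are $\omega$-saturated and take as candidate bisimulation the relation $\bis=\{(a,b): \Amf,a\equiv_{\FOTTE,\varrho}\Bmf,b\}$ of agreement on all one-free-variable $\FOTTE(\varrho)$-formulas. Globality is obtained from saturation: each one-variable $\varrho$-type realised in $\Amf$ is finitely satisfiable in $\Bmf$, since $\Amf$ and $\Bmf$ agree on the sentences $\exists x\,\chi(x)$, and hence is realised; the symmetric statement gives the other half. To verify the forth condition, fix $(a,b)\in\bis$ and $a'\in\text{dom}(\Amf)$ and let $\Sigma(y)=\{\varphi(b,y) : \Amf\models\varphi(a,a')\}$ be the complete two-variable $\varrho$-type of $(a,a')$ transported to the base point $b$. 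Any finite conjunction $\varphi$ in $\Sigma$ gives $\Amf\models\exists y\,\varphi(a,y)$, a one-variable formula true at $a$, hence true at $b$ because $(a,b)\in\bis$; so $\Sigma$ is finitely satisfiable over $b$ and, by $\omega$-saturation, realised by some $b'$. Completeness of $\Sigma$ ensures simultaneously that $(a,a')\mapsto(b,b')$ is a partial $\varrho$-isomorphism and that $(a',b')\in\bis$. The back condition is symmetric, and $\abf\mapsto\bbf$ is a partial $\varrho$-isomorphism by hypothesis, so $\bis$ witnesses $\Amf,\abf\sim_{\FOTTE,\varrho}\Bmf,\bbf$.

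For $\GFTTEc$ the only differences are that the back-and-forth moves are restricted to guarded successors --- elements $a'$ with $R^{\Amf}(a,a')$ for some binary $R\in\varrho$, or $a'=a$ --- and that constants must be matched. In the soundness induction the guard $\alpha$ attached to each quantifier restricts the relevant witnesses to exactly such $\varrho$-successors, which is where the guarded back-and-forth applies, while constants are handled by including the atoms $R(c,y)$ and equalities $y=c$ in the transported types. In the completeness direction I would restrict $\Sigma(y)$ to guarded formulas and use the guard satisfied by $a'$ to keep it finitely satisfiable over $b$, and secure $(c^{\Amf},c^{\Bmf})\in\bis$ by the same saturation argument as for globality. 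One point to check is that passing to $\omega$-saturated elementary extensions preserves the standing assumption that $E_1,E_2$ are equivalence relations, which is immediate since that property is first-order axiomatisable and therefore survives elementary extension. I expect the main difficulty to be bookkeeping rather than conceptual: phrasing the inductive claim so that it covers two free variables while correctly tracking variable reuse, and arranging $\Sigma(y)$ as a complete type so that a single realising element yields the partial isomorphism and $\bis$-membership at once.
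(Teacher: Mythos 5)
The paper does not prove this lemma at all: it states it as a well-known characterisation and points to the cited sources (Goranko--Otto, Gr\"adel--Otto, Pratt-Hartmann). Your argument is exactly the standard proof found there --- induction on formulas for the forward direction and an $\omega$-saturation/type-realisation argument for the converse, with the guarded case obtained by restricting the back-and-forth to guarded successors and matching constants --- and it is correct; the only wording worth tightening is in the existential step of the induction, where the forth condition must be applied at the $\bis$-pair assigned to the variable that \emph{remains free} (with the witness filling the reused slot), not at the previous occupant of the reused slot.
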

 
Using this lemma, one can obtain the following variant of Robinson's joint consistency criterion~\cite{Robinson56,DBLP:conf/lics/JungW21}:

\begin{lemma}\label{lemcharinterpolation}
Let $L \in \{\FOTTE, \GFTTEc\}$. Then $L(\sigma)$-formulas $\varphi(\xbf)$ and $\psi(\xbf)$ do not have an $L$-interpolant iff there exist pointed $L(\sigma)$-structures $\Amf,\abf$ and
$\Bmf,\bbf$ such that 

\begin{itemize}
\item $\Amf\models \varphi(\abf)$\textup{;}

\item $\Bmf \models \neg \psi(\bbf)$\textup{;}

\item $\Amf,\abf \sim_{L,\varrho} \Bmf,\bbf$, where $\varrho = \sig(\varphi) \cap \sig(\psi)$.
\end{itemize}
\end{lemma}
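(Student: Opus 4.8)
The plan is to prove both directions of the biconditional, using Lemma~\ref{lem:guardedbisim} as the bridge between the bisimulation relation $\sim_{L,\varrho}$ and $L(\varrho)$-equivalence $\equiv_{L,\varrho}$; the whole statement is a fragment-relativised form of Robinson's joint consistency theorem.

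The direction from right to left is routine. I would assume the three conditions hold and, for contradiction, that an $L$-interpolant $\iota\in L(\varrho)$ exists, so $\varphi\models\iota$ and $\iota\models\psi$. From $\Amf\models\varphi(\abf)$ and $\varphi\models\iota$ we get $\Amf\models\iota(\abf)$. Since $\Amf,\abf\sim_{L,\varrho}\Bmf,\bbf$, the first (unconditional) implication of Lemma~\ref{lem:guardedbisim} gives $\Amf,\abf\equiv_{L,\varrho}\Bmf,\bbf$, and as $\iota$ is an $L(\varrho)$-formula this yields $\Bmf\models\iota(\bbf)$, hence $\Bmf\models\psi(\bbf)$ — contradicting $\Bmf\models\neg\psi(\bbf)$.

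For the left-to-right direction I would first reduce the construction of the required structures to producing a single \emph{compatible pair}: models $\Amf,\abf\models\varphi$ and $\Bmf,\bbf\models\neg\psi$ with $\Amf,\abf\equiv_{L,\varrho}\Bmf,\bbf$. Assuming no $L$-interpolant exists, I obtain such a pair by two applications of first-order compactness, exploiting that both $\FOTTE$ and $\GFTTEc$ are closed under finite conjunction and negation of formulas in the free variables $\xbf$. Let $\mathsf{Con}=\{\theta\in L(\varrho):\varphi\models\theta\}$. If $\mathsf{Con}\cup\{\neg\psi\}$ were unsatisfiable, compactness would give finitely many $\theta_1,\dots,\theta_k\in\mathsf{Con}$ with $\theta_1\wedge\dots\wedge\theta_k\models\psi$; as $L(\varrho)$ is closed under conjunction, $\theta_1\wedge\dots\wedge\theta_k$ would be an $L$-interpolant, contrary to assumption. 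So $\mathsf{Con}\cup\{\neg\psi\}$ has a model $\Bmf,\bbf$. Writing $t$ for the complete $L(\varrho)$-type of $\bbf$ in $\Bmf$, a second compactness argument shows $\{\varphi\}\cup t$ is satisfiable: otherwise some finite conjunction $\chi_1\wedge\dots\wedge\chi_m$ of members of $t$ satisfies $\varphi\models\neg(\chi_1\wedge\dots\wedge\chi_m)$, so $\neg(\chi_1\wedge\dots\wedge\chi_m)\in\mathsf{Con}$ and thus holds at $\bbf$ in $\Bmf$, contradicting $\chi_1,\dots,\chi_m\in t$. A model $\Amf,\abf$ of $\{\varphi\}\cup t$ then realises the complete type $t$, whence $\Amf,\abf\equiv_{L,\varrho}\Bmf,\bbf$.

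Finally I would upgrade elementary equivalence to bisimulation. Passing to $\omega$-saturated elementary extensions $\Amf^\ast\succeq\Amf$ and $\Bmf^\ast\succeq\Bmf$, which exist by standard model theory and preserve all $\FO$-properties over $\sigma$, we keep $\Amf^\ast\models\varphi(\abf)$ and $\Bmf^\ast\models\neg\psi(\bbf)$, the relations $E_1,E_2$ remain equivalence relations (being $\FO$-axiomatised), and $\Amf^\ast,\abf\equiv_{L,\varrho}\Bmf^\ast,\bbf$ still holds since $L(\varrho)\subseteq\FO(\sigma)$. The second implication of Lemma~\ref{lem:guardedbisim}, applicable because $\Amf^\ast$ and $\Bmf^\ast$ are $\omega$-saturated, now yields $\Amf^\ast,\abf\sim_{L,\varrho}\Bmf^\ast,\bbf$, so these two pointed structures witness all three conditions. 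The point requiring most care — and the place where the failure of the CIP is sidestepped — is the first compactness step: it delivers an interpolant lying in the fragment itself only because $L(\varrho)$ is Boolean-closed, so that a finite conjunction of $\varrho$-consequences of $\varphi$ is again an $L(\varrho)$-formula; the subsequent saturation step is where one must check that $\omega$-saturation is compatible with keeping $E_1$ and $E_2$ as genuine equivalence relations.
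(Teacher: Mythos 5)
Your proof is correct and takes essentially the approach the paper intends: the paper itself gives no proof of Lemma~\ref{lemcharinterpolation}, merely noting that it follows from Lemma~\ref{lem:guardedbisim} as a variant of Robinson's joint consistency criterion~\cite{Robinson56,DBLP:conf/lics/JungW21}, and your argument (double compactness using Boolean closure of $L(\varrho)$ to produce $\equiv_{L,\varrho}$-equivalent models of $\varphi$ and $\neg\psi$, then $\omega$-saturated elementary extensions to upgrade equivalence to bisimilarity) is exactly that standard derivation.
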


\begin{definition}\em
Given formulas $\varphi(\xbf)$, $\psi(\xbf)$ and a signature $\varrho$, 
an \emph{explicit $\varrho$-definition of $\psi(\xbf)$ modulo $\varphi(\xbf)$ in a logic $L$} is an $L(\varrho)$-formula $\chi(\xbf)$ such that $\models \varphi(\xbf) \to (\psi(\xbf) \leftrightarrow \chi(\xbf))$. 
\end{definition}

The \emph{explicit $\varrho$-definition existence problem} (\emph{EDEP}) for $L$ is formulated as follows:
\begin{description}
\item[($\boldsymbol L$-\emph{EDEP})] given $L$-formulas $\varphi(\xbf)$,  $\psi(\xbf)$ and a signature $\varrho$, decide whether there exists an explicit $\varrho$-definition of $\psi(\xbf)$ modulo $\varphi(\xbf)$ in $L$.
\end{description}

The IEP and EDEP turn out to be closely related~\cite{MGabbay2005-MGAIAD}. Here, we only need the following lemma whose proof can be found in~\cite{DBLP:conf/kr/KuruczWZ23}:

\begin{lemma}\label{p:cipvspbdp}
For any $L \in \{\FOTTE, \GFTTEc\}$, the $L$-IEP is polynomially reducible to the $L$-EDEP.
\end{lemma}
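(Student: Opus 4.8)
The plan is to reduce the $L$-IEP to the $L$-EDEP by encoding the interpolant existence question for two formulas $\varphi(\xbf)$ and $\psi(\xbf)$ as an explicit definition existence question for a single fresh predicate. The key idea is standard in the interpolation-vs-definability literature: an interpolant for $\varphi$ and $\psi$ exists precisely when a certain auxiliary atom is explicitly definable modulo a conjunction that encodes both $\varphi$ and $\psi$ over disjoint copies of the non-shared symbols.

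Concretely, given $L(\sigma)$-formulas $\varphi(\xbf)$ and $\psi(\xbf)$ with $\varrho = \sig(\varphi)\cap\sig(\psi)$, I would first \emph{separate} the private vocabularies. Let $\sigma_\varphi = \sig(\varphi)\setminus\varrho$ and $\sigma_\psi = \sig(\psi)\setminus\varrho$. Replace each non-shared symbol in $\varphi$ by a fresh primed copy to obtain $\varphi'$, and likewise in $\psi$ to obtain $\psi'$, so that $\sig(\varphi')\cap\sig(\psi') = \varrho$ by construction, while the two equivalence predicates $E_1,E_2$ stay shared (they are always in $\varrho$). I then introduce a fresh unary predicate $P$ (applied to $\xbf$, matching the arity of the free-variable tuple) and set
$$
\Phi(\xbf) \;=\; \big(P(\xbf)\to\varphi'(\xbf)\big)\wedge\big(\neg P(\xbf)\to\neg\psi'(\xbf)\big).
$$
The claim to verify is that a $\varrho$-explicit definition of $P(\xbf)$ modulo $\Phi(\xbf)$ in $L$ exists if and only if an $L$-interpolant for the original $\varphi,\psi$ exists. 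One must check that $\Phi$, $P$, and $\varrho$ are indeed a legitimate $L$-EDEP instance: for $\FOTTE$ this is immediate since the construction introduces no constants and keeps everything within two variables; for $\GFTTEc$ one should confirm the guardedness pattern is preserved by the Boolean recombination, which it is because $\Phi$ is just a Boolean combination of subformulas already in $\GFTTEc$.

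For the correctness equivalence, I would argue both directions via the bisimulation characterisation of Lemma~\ref{lemcharinterpolation}. If an interpolant $\iota(\xbf)\in L(\varrho)$ exists, then $\iota$ itself serves as the explicit definition, since $\varphi'\models\iota$ and $\iota\models\psi'$ yield $\models\Phi(\xbf)\to(P(\xbf)\leftrightarrow\iota(\xbf))$ after unwinding the definition of $\Phi$. Conversely, if no interpolant exists, Lemma~\ref{lemcharinterpolation} supplies pointed structures $\Amf,\abf\models\varphi(\abf)$ and $\Bmf,\bbf\models\neg\psi(\bbf)$ with $\Amf,\abf\sim_{L,\varrho}\Bmf,\bbf$; I would then fuse them into structures witnessing that $P$ is not implicitly definable modulo $\Phi$ — interpret $P$ to hold on the $\varphi$-side and fail on the $\neg\psi$-side, reading the private symbols of $\Amf$ as the primed $\varphi'$-vocabulary and those of $\Bmf$ as the primed $\psi'$-vocabulary — so no $\varrho$-formula can distinguish the two $\varrho$-bisimilar points, defeating explicit definability. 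The polynomial bound is clear: $\Phi$ has size linear in $|\varphi|+|\psi|$ and renaming symbols is linear.

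The step I expect to be the main obstacle is the converse direction, specifically the construction of the two models refuting implicit (hence explicit) definability of $P$ from the bisimilar failure witnesses. The delicacy is that the shared equivalence relations $E_1,E_2$ constrain how the $\varrho$-bisimulation $\bis$ interacts with the private structure, so when I reinterpret $\Amf$'s private symbols as primed and fuse with $\Bmf$, I must ensure the resulting single $\sigma$-structure still satisfies $\Phi$ everywhere and that the $\varrho$-reducts remain genuinely $L(\varrho)$-equivalent at $\abf$ and $\bbf$ — this requires that the equivalence-relation semantics be respected on both components simultaneously, which is exactly where the two-equivalence-relation setting demands care. Fortunately this is routine model-gluing on top of Lemma~\ref{lem:guardedbisim}, and since the lemma we need is cited as already proved in~\cite{DBLP:conf/kr/KuruczWZ23}, I would in the final write-up simply invoke that reference for the detailed verification while sketching the reduction above.
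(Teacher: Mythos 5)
Your reduction is correct and is, in essence, the standard one: the paper itself gives no proof of this lemma but delegates it to \cite{DBLP:conf/kr/KuruczWZ23}, where exactly this construction is used, namely, introduce a fresh predicate $P$ and ask for an explicit $\varrho$-definition of $P(\xbf)$ modulo $\Phi(\xbf) = \big(P(\xbf)\to\varphi(\xbf)\big)\wedge\big(\neg P(\xbf)\to\neg\psi(\xbf)\big)$, with $\varrho=\sig(\varphi)\cap\sig(\psi)$ supplied as the input signature of the EDEP instance. Two remarks. First, the renaming of private symbols is a no-op: any symbol occurring in both $\varphi$ and $\psi$ is by definition in $\varrho$, so the private vocabularies are already disjoint, and since the EDEP takes $\varrho$ as an explicit input there is no need to manufacture a particular shared signature for the new instance. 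Second, and more importantly, the converse direction that you single out as the main obstacle requires none of the machinery you describe. Argue contrapositively: any explicit $\varrho$-definition $\chi$ of $P$ modulo $\Phi$ is itself an interpolant. Indeed, given $\Amf\models\varphi(\abf)$, expand $\Amf$ by interpreting the fresh $P$ as the full relation; then $\Phi(\abf)$ holds, so $\chi(\abf)$ holds, and since $\chi$ is $P$-free this gives $\varphi\models\chi$. Given $\Bmf\models\chi(\bbf)$, suppose $\Bmf\models\neg\psi(\bbf)$ and expand $\Bmf$ by interpreting $P$ as empty; then $\Phi(\bbf)$ holds, forcing $P(\bbf)\leftrightarrow\chi(\bbf)$, which is absurd, so $\chi\models\psi$. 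No bisimulations, no model gluing, and no interaction with $E_1,E_2$ arises; Lemmas~\ref{lem:guardedbisim} and~\ref{lemcharinterpolation} are not needed for this lemma. Your worry about refuting \emph{implicit} definability is also a red herring: the EDEP only concerns explicit definability, for which the contrapositive argument above suffices.
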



\section{Undecidability of the $\FOTTE$-IEP}\label{undec}

In this section, we prove the undecidability of the $\FOTTE$-IEP by a reduction of the undecidable \emph{infinite Post Correspondence Problem} ($\omega$PCP)~\cite{DBLP:journals/eik/Ruohonen85,DBLP:journals/ipl/Gire86}, which is formulated as follows: given an alphabet $\Gamma=\{A_{1},\dots,A_{n}\}$, $n \ge 2$, and a finite set $\P$ of pairs $(v_{1},w_{1}),\dots,(v_{k},w_{k})$ of non-empty words over $\Gamma$, decide whether there exists an infinite sequence of indices $i_{1},i_{2},\dots$, for $1 \le i_j \le k$ and $j < \omega$, such that the $\omega$-words $v_{i_{1}}v_{i_{2}}\ldots$ and $w_{i_{1}}w_{i_{2}}\ldots$ coincide. If this is the case, the $\omega$PCP instance $\P$ is said to \emph{have a solution}.

Suppose an $\omega$PCP instance $\P$ is given. Our aim is to construct $\FOTTE$-formulas $\varphi(x)$ and $\psi(x)$ such that $\P$ has a solution iff $\varphi(x)$ and $\neg\psi(x)$ are satisfied in $\FOTTE(\rho)$-bisimilar pointed structures, where $\varrho$ is the shared signature of $\varphi$ and $\psi$, and then apply Lemma~\ref{lemcharinterpolation}. In our construction 
$\varrho = \{R,S\} \cup \Gamma$ 
with two binary predicates $R$, $S$ and the $A_i \in \Gamma$ treated as unary predicates. 
The formula $\varphi(x)$ is defined by taking:  
\begin{align*}
\varphi(x) = &\ X_{0}(x) \wedge \exists y\, \big(R(x,y) \wedge X_{1}(y) \big)   \wedge \forall x\, \big[ X_{1}(x) \rightarrow \exists y\, \big( R(x,y) \wedge X_{2}(y)\big) \big] \wedge {}\\
& \ \forall x\, \big[ X_{2}(x) \rightarrow \exists y\, \big( R(x,y) \wedge X_{0}(y)\big) \big] \land \forall x,y\, \big((X_{0}(x) \wedge X_{0}(y)) \rightarrow x=y \big) \land{} \\
& \mbox{}\hspace*{6.5cm} \exists y \, \big[ S(x,y) \wedge \bigwedge_{A_i\in \Gamma} \big( A_i(y) \to A_i(y) \big)\big].
\end{align*}
The only purpose of $\varphi(x)$ is to generate an $R$-cycle with three points: for any structure $\Amf,a_0$, if $\Amf\models \varphi(a_0)$, then $\Amf$ contains a cycle $R(a_{0},a_{1}),R(a_1,a_{2}),R(a_2,a_0)$. The last conjunct of $\varphi$ is only needed to ensure that $\sig(\varphi) \cap \sig(\psi) = \varrho$.

Before defining $\psi(x)$ formally, we explain the intuition behind $\neg \psi(x)$. Suppose $\Amf,a_0$ and $\Bmf,b_0$ are such that $\Amf\models \varphi(a_0)$, $\Bmf \models \neg \psi(b_0)$ and $\Amf,a_0 \sim_{L,\varrho} \Bmf,b_0$, for $L = \FOTTE$. Then $\Bmf$ contains an $R$-chain $b_{0}, b_{1}, b_2, b_3,\dots$ such that $\Amf,a_{0} \sim_{L,\varrho} \Bmf,b_{3m}$, for $m < \omega$. The purpose of $\neg \psi(x)$ is to generate an infinite $S$-chain $\chv$ starting from $b_0$, along which an infinite sequence of words $v_j$ is written, and also an infinite $S$-chain $\chw$ starting from $b_3$, along which a sequence of $w_j$ is written. Using the equivalence relations $E_1$ and $E_2$, the formula $\neg\psi(x)$ ensures that the pairs $(v_j,w_j)$ are all in $\P$. That the resulting $\omega$-words coincide (and give a solution to $\P$) is ensured by $\Amf,a_0 \sim_{L,\varrho} \Bmf,b_0$. The intended models $\Amf,a_0$ and $\Bmf,b_0$ are illustrated in Fig.~\ref{intended}.

\begin{figure}[htb]
\begin{center}
\includegraphics[scale=0.53]{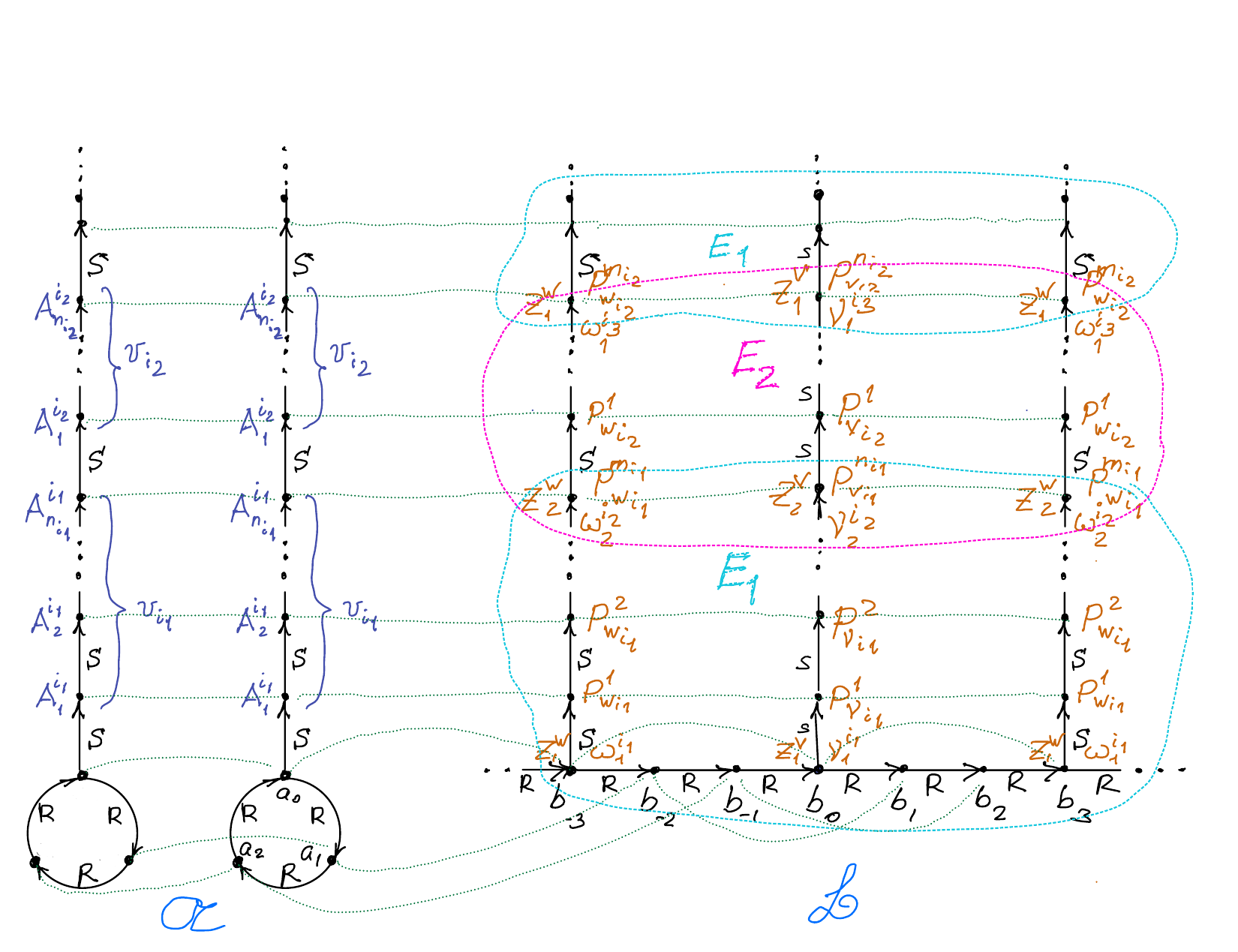}
\caption{Intended models $\Amf,a_0$ and $\Bmf,b_0$ with $\Amf\models \varphi(a_0)$, $\Bmf \models \neg \psi(b_0)$, and $\Amf,a_0 \sim_{L,\varrho} \Bmf,b_0$, where $\varrho$-bisimilar points in the disjoint union of $\Amf$ and $\Bmf$ are connected by dotted lines and $E_i$-equivalence classes in $\Bmf$, $i=1,2$, are encircled by dashed lines. Model $\Amf$ contains two disjoint copies of a 3-point $R$-cycle followed by an infinite $S$-chain. Model $\Bmf$ has an infinite $R$-chain of points $b_i$, for $i \in \mathbb Z$; each $b_{6i}$ starts an infinite $S$-chain for the word $v_{i_1}v_{i_2}\dots$, and  each $b_{6i+3}$ starts an infinite $S$-chain for the word $w_{i_1}w_{i_2}\dots$. These $\omega$-words coincide with the $\omega$-word written on the $S$-chains in $\Amf$.}
\label{intended}
\end{center}
\end{figure}

To define $\neg \psi$ formally, suppose that $v_{j}=A_{1}^{j} \dots A_{n_{j}}^{j}$ and $w_{j} = B_{1}^{j} \dots B_{m_{j}}^{j}$, for $1 \le j \le k$. 
Along with the $A_i \in \Gamma$, to generate unique $S$-chains $\chv$ and $\chw$, we require auxiliary unary predicates $P^l_{v_j}$, $l = 1, \dots, n_j$, and $P^l_{w_j}$, $l = 1, \dots, m_j$, as well as $Z^v_i$ and $Z^w_i$, for $i=1,2$, and $Z^v$, $Z^w$. Let $\bar{1}=2$, $\bar{2}=1$, $\bar{x}=y$, and $\bar{y}=x$. 

Now, we define $\neg \psi(x)$ to be a conjunction of the following four groups of formulas:

\begin{description}
\item[(generation)] $Z^v_1(x) \land Z^v(x)$, \ $\forall x,y\, \big(Z^u(x) \land S(x,y) \to Z^u(y) \big)$, for $u \in \{v,w\}$,\\
\mbox{}\hspace*{1.34cm} $\forall x\, \big( Z^v_i(x) \rightarrow \bigvee_{j\leq k} \nu_i^{j}(x) \big)$, \ $\forall x\, \big( Z^w_i(x) \rightarrow \bigvee_{j\leq k} \omega_i^{j}(x) \big)$, for $i=1,2$,\\
\mbox{}\hspace*{1.34cm} $\forall y\, \big( R(x,y) \rightarrow Y_1(y) \big)$,  $\forall x,y\, \big( Y_1(x) \land R(x,y) \rightarrow Y_2(y) \big)$,\\ 
\mbox{}\hfill$\forall x,y\, \big(Y_2(x) \land R(x,y) \rightarrow Z^w_1(y) \land Z^w(y) \big)$,
\end{description}
where, the $\nu_i^j$ generating $v_j$, for $i=1,2$, is defined recursively as 
\begin{align*}\label{four}
& \nu_i^j(x) = \alpha_{i,0}^j(x) \land \beta_{i,0}^j(x),\\
& \alpha_{i,l}^j(x) = \exists \bar x\, ( S(x,\bar x) \land E_i(x,\bar x) \land A_l^{j}(\bar x) \land P^l_{v_j}(\bar x) \land \alpha_{i,l+1}^j(\bar x) \big), \ \ \text{ for } l = 0,\dots,n_j -1, \\
& \alpha_{i,n_j}^j(x) = \exists \bar x\, \big( S(x,\bar x) \land E_i(x,\bar x) \land A_{n_j}^{j}(\bar x) \land P^{n_j}_{v_j}(\bar x) \land Z^v_{\bar{i}}(\bar x) \big), \\
& \beta_{i,l}^j(x) = \forall \bar x\, \big( S(x,\bar x) \to E_i(x,\bar x) \land A_l^{j}(\bar x) \land P^l_{v_j}(\bar x) \land \beta_{i,l+1}^j(\bar x) \big), \ \ \text{ for } l = 0,\dots,n_j -1, \\
& \beta_{i,n_j}^j(x) = \forall \bar x\, \big( S(x,\bar x) \to E_i(x,\bar x) \land A_{n_j}^{j}(\bar x) \land P^{n_j}_{v_j}(\bar x) \land Z^v_{\bar{i}}(\bar x) \big) ,
\end{align*}
and the $\omega_i^j$, generating $w_j$, are defined analogously but with $B^j_l$, $P^l_{w_j}$ and $m_j$ in place of $A^j_l$, $P^l_{v_j}$ and $n_j$, respectively. 
\begin{description}
\item[(disjointness)] $\forall x\, \big( P(x) \rightarrow \neg P'(x)\big)$,\ for any distinct $P,P' \in \Gamma$, distinct $P,P'$ of the form $P^l_{v_j}$, $P^l_{w_j}$, as well as for $\{P,P'\} = \{Z^u_1,Z^u_2\}$ with $u \in \{v,w\}$, and $\{P,P'\} = \{Z^v,Z^w\}$. 

\item[(coordination)] $\forall x,y\, \big( R(x,y) \rightarrow E_{1}(x,y) \big)$,\\
\mbox{}\hspace*{1.7cm} $\forall x \,\big[\big( Z^v_{i}(x) \wedge \nu_i^{j}(x)\big)  \rightarrow{}  \forall y\, \big(E_{i}(x,y)\wedge Z^w_{i}(y)\rightarrow \omega_i^{j}(y)\big) \big]$, for $i=1,2$, $j \le k$,\\
\mbox{}\hspace*{1.7cm} $\forall x,y\, \big(	E_{i}(x,y) \wedge Z^v_{\bar{i}}(x) \wedge Z^w_{\bar{i}}(y) \rightarrow E_{\bar{i}}(x,y) \big)$, for $i=1,2$.

\item[(uniqueness)] $\forall x,y \, \big( P(x) \land P(y) \land E_i(x,y) \to x=y \big)$, for   $P$ of the form $P^l_{v_j}$, $P^l_{w_j}$, $i=1,2$,\\
\mbox{}\hspace*{1.4cm} $\forall x,y \, \big( P^l_{u_j}(x) \land P^{l'}_{u'_{j'}}(y) \land E_i(x,y) \to \bot \big)$, for $i=1,2$ and all pairs $P^l_{u_j}$, $P^{l'}_{u'_{j'}}$\\
\mbox{}\hspace*{1.4cm} of the form $P^l_{v_j}$, $P^l_{w_j}$with $l \ne l'$.

%
\end{description}
%

We now show that the constructed $\FOTTE$-formulas $\varphi$ and $\psi$ are as required:

\begin{theorem}\label{oPCP}
Let $\varphi$ and $\psi$ be the $\FOTTE$-formulas defined above for a given $\omega{}$PCP instance $\P$. Then $\P$ has a solution iff $\varphi$ and $\neg\psi$ are satisfied in $\FOTTE(\varrho)$-bisimilar pointed structures. 
\end{theorem}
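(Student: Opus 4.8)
My plan is to prove the two directions separately, in each case verifying the three conditions of Lemma~\ref{lemcharinterpolation} against the intended models of Fig.~\ref{intended}. For the left-to-right direction I would fix a solution $i_1,i_2,\dots$ of $\P$, set $s=v_{i_1}v_{i_2}\cdots=w_{i_1}w_{i_2}\cdots$, and build the two structures explicitly. I take $\Amf$ to be a three-point $R$-cycle through $a_0$ (with $a_0,a_1,a_2$ carrying $X_0,X_1,X_2$) together with an infinite $S$-chain from $a_0$ whose $t$-th point carries the $t$-th letter of $s$; then $\Amf\models\varphi(a_0)$ is immediate. For $\Bmf$ I take the two-sided $R$-chain $(b_i)_{i\in\Z}$, attach to each $b_{6i}$ an $S$-chain spelling $v_{i_1}v_{i_2}\cdots$ with the predicates $P^l_{v_j},Z^v_i,Z^v$ and to each $b_{6i+3}$ an $S$-chain spelling $w_{i_1}w_{i_2}\cdots$ with $P^l_{w_j},Z^w_i,Z^w$, and define $E_1,E_2$ so that the $t$-th $v$-block and the $t$-th $w$-block form one common equivalence class, alternating between $E_1$ and $E_2$ as $t$ grows, with equal-position points of the two blocks identified inside the class. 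Verifying $\Bmf\models\neg\psi(b_0)$ then reduces to reading the four groups of conjuncts off this layout, which is tedious but routine. Finally, for $\FOTTE(\varrho)$-bisimilarity I exhibit the global $\varrho$-bisimulation $\bis$ that unravels the cycle onto the $R$-chain, relating $a_t$ to every $b_{3m+t}$ ($t=0,1,2$) and the single $S$-chain of $\Amf$ to every $S$-chain of $\Bmf$ position by position; since $\varrho$ omits $E_1,E_2$ and all auxiliary unary predicates, and the solution makes every $S$-chain of $\Bmf$ spell the same $\Gamma$-word $s$ as the chain of $\Amf$, the partial-isomorphism condition holds at each related pair.

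For the converse I start from arbitrary $\Amf,a_0$ and $\Bmf,b_0$ meeting the three conditions. From $\Amf\models\varphi(a_0)$ I obtain the $R$-cycle $a_0,a_1,a_2$, and iterating the \emph{forth} clause of the witnessing bisimulation $\bis$ produces an infinite $R$-chain $b_0,b_1,b_2,\dots$ in $\Bmf$ with $a_{t\bmod 3}\,\bis\,b_t$; in particular $a_0\,\bis\,b_0$ and $a_0\,\bis\,b_3$. Reading the generation conjuncts of $\neg\psi$ at $b_0$ places $b_0\in Z^v_1\cap Z^v$ and, via the $Y_1,Y_2$ relay along $R$, places $b_3\in Z^w_1\cap Z^w$; the implications $Z^v_i(x)\to\bigvee_j\nu_i^j(x)$ and $Z^w_i(x)\to\bigvee_j\omega_i^j(x)$, applied repeatedly, then force an infinite $S$-chain from $b_0$ spelling some word $V=v_{j_1}v_{j_2}\cdots$ and one from $b_3$ spelling some $W=w_{j'_1}w_{j'_2}\cdots$, where the universal ($\beta$-)parts of the $\nu_i^j$ and $\omega_i^j$ make the $\Gamma$-label along \emph{every} $S$-path deterministic.

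The heart of the argument is to show that $(j_t)_t$ solves $\P$, which needs both that the two index sequences agree and that $V=W$. For $j_t=j'_t$ I argue by induction on $t$ using the equivalence relations: $R\subseteq E_1$ gives $E_1(b_0,b_3)$, so the two first blocks lie in one $E_1$-class, and the coordination conjunct $(Z^v_i(x)\wedge\nu_i^j(x))\to\forall y\,(E_i(x,y)\wedge Z^w_i(y)\to\omega_i^j(y))$ forces the first $w$-block to be $w_{j_1}$; transitivity of $E_i$ together with $E_i(x,y)\wedge Z^v_{\bar i}(x)\wedge Z^w_{\bar i}(y)\to E_{\bar i}(x,y)$ carries the common class to the next pair of blocks with the roles of $E_1,E_2$ swapped, while the disjointness and uniqueness conjuncts ensure that each such class holds at most one occurrence of every position predicate and matches $v$- and $w$-positions index-by-index, so the two chains cannot desynchronise. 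For $V=W$ I use $a_0\,\bis\,b_0$ and $a_0\,\bis\,b_3$: pulling the $S$-chain at $b_0$ \emph{back} through $\bis$ yields an $S$-path from $a_0$ spelling $V$, and pushing that path \emph{forth} through $\bis$ from $b_3$ yields an $S$-path from $b_3$ whose $\Gamma$-labels still spell $V$; since determinism makes every $S$-path from $b_3$ spell $W$, we get $V=W$. Combining the two facts, $v_{j_1}v_{j_2}\cdots=w_{j_1}w_{j_2}\cdots$, so $(j_t)_t$ is a solution of $\P$.

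I expect the converse to be the main obstacle, and within it the inductive bookkeeping of the alternating $E_1/E_2$ classes that certifies $j_t=j'_t$: one must check that no unintended $E_i$-edge can collapse a class or let a block slip out of step (exactly what disjointness and uniqueness are designed to rule out), and that the deterministic universal parts of the generation formulas let $\bis$ transport a whole $\omega$-word between the $v$- and $w$-chains even though $\Amf$ is an arbitrary model of $\varphi$ over which we control nothing beyond the three-point cycle and the one guaranteed $S$-successor.
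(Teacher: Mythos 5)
Your right-to-left direction is essentially the paper's argument and is sound: you extract the $R$-chain $b_0,b_1,b_2,b_3$ with $a_0$ bisimilar to both $b_0$ and $b_3$, use \textbf{(generation)}, \textbf{(disjointness)} and \textbf{(uniqueness)} to obtain the label-deterministic $S$-chains $\chv$ and $\chw$, synchronise their blocks by the alternating $E_1/E_2$ argument from \textbf{(coordination)}, and transfer the $\Gamma$-word between the two chains through the bisimulation. The paper composes the two bisimulations at $a_0$ to relate $b_0$ and $b_3$ directly, which is the same step as your pull-back/push-forward through $\Amf$.

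The left-to-right direction, however, contains a genuine error: the structure $\Amf$ you build --- a single three-point $R$-cycle with a single $S$-chain attached to $a_0$ --- is \emph{not} $\FOTTE(\varrho)$-bisimilar to $\Bmf$. The problem is the partial $\varrho$-isomorphism requirement in Definition~\ref{deftwo}, which quantifies over \emph{all} elements $a'$ and $b'$, not just $R$- or $S$-neighbours. Take $(a,b)\in\bis$ with $a$ the $i$th point of your single $S$-chain and $b$ the $i$th point of the $v$-chain at $b_0$, and let $b'$ be the $i$th point of the $w$-chain at $b_3$. The back condition demands some $a'$ with $(a',b')\in\bis$ such that $(a,a')\mapsto(b,b')$ is a partial $\varrho$-isomorphism. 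Since the $\omega$-word $s$ need not be periodic, the only element of your $\Amf$ whose $\Gamma$-label and whole $S$-future match those of $b'$ is $a'=a$ itself; but then $a=a'$ while $b\ne b'$, violating the isomorphism condition on equality (and choosing $b'$ at position $i+1$ instead forces $S(a,a')$ while $\neg S(b,b')$ holds). This is exactly why the intended $\Amf$ in Fig.~\ref{intended} consists of \emph{two disjoint copies} of the cycle-plus-chain: the second copy supplies an $a'$ that is $\varrho$-indistinguishable from position $i$ (or $i{+}1$) yet neither equal nor $S$-related to $a$. The fix is local --- duplicate your $\Amf$ --- but as written your claim that the partial-isomorphism condition holds at each related pair is false.
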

\begin{proof}
$(\Rightarrow)$ Suppose $v_{i_{1}}v_{i_{2}}\dots  = w_{i_{1}}w_{i_{2}}\dots$ is a solution to the given $\P$. It is not hard to check that, for $\Amf$ and $\Bmf$ in Fig.~\ref{intended}, we have $\Amf\models \varphi(a_0)$, $\Bmf \models \neg \psi(b_0)$, and $\Amf,a_0 \sim_{\FOTTE(\varrho)} \Bmf,b_0$. We only note that two disjoint copies of a 3-point $R$-cycle followed by an $S$-chain in $\Amf$ are needed to ensure $\FOTTE(\varrho)$-bisimilarity between $\Amf,a_0$ and $\Bmf,b_0$. Indeed, suppose, for example, that $a$ and $b$ are the $i$th points in the $S$-chains starting from $a_0$ and $b_0$, respectively. Now, if we consider the $(i+1)$th point $b'$ in the $S$-chains starting from $b_3$, then, to obtain a partial $\varrho$-isomorphism between $(b,b'$) and  some $(a,a')$, we may need to take the $(i+1)$th point $a'$ from the second $S$-chain in $\Amf$.

$(\Leftarrow)$ Suppose $\Amf\models \varphi(a_0)$, $\Bmf \models \neg \psi(b_0)$, and $\Amf,a_0 \sim_{\FOTTE,\varrho} \Bmf,b_0$. 
Consider first $\Bmf,b_0$. As $\Bmf \models \neg \psi(b_0)$, the axioms in the first two lines of \textbf{(generation)} give an infinite sequence $\chv = \chv_{i_1} \chv_{i_2} \dots$ depicted below. We show that the $\omega$-word $v_{i_1}v_{i_2} \dots$ over $\Gamma$ written on $\chv$ is unique. Indeed, by \textbf{(disjointness)} and \textbf{(generation)}, all $S$-successors of\\
\centerline{\includegraphics[scale=0.8]{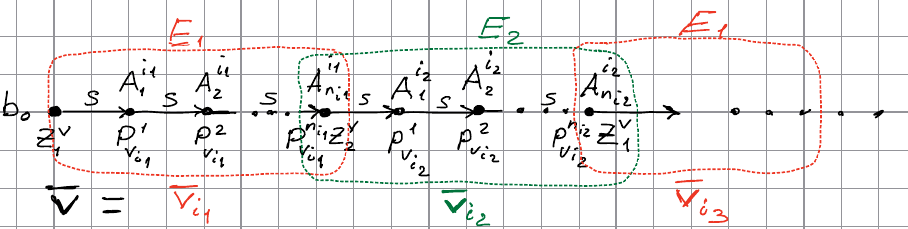}}\\
$b_0$ carry the same $P^1_{v_{i_1}}$, all of them are in the same $E_1$-equivalence class, and so must coincide by \textbf{(uniqueness)}. Next we see that all of the $S$-successors of the $P^1_{v_{i_1}}$-point  carry $P^2_{v_{i_1}}$, and so must coincide, etc. This gives us the word $v_{i_1}$ carried by the segment $\chv_{i_1}$ sitting in the same $E_1$-equivalence class. The last point in this segment carries $Z^v_2$, which triggers the next segment $\chv_{i_2}$ sitting in the same $E_2$-equivalence class, and so on. All of the points in $\chv$ carry $Z^v$ by the second formula in \textbf{(generation)}. Points on the boundaries between $\chv_{i_j}$ and $\chv_{i_{j+1}}$ belong to both $E_1$- and $E_2$-equivalence classes.

As $\Amf,a_0 \sim_{L,\varrho} \Bmf,b_0$, there are points $b_i$, $i >0$,  in $\Bmf$ with $R(b_0,b_1),R(b_1,b_2),R(b_2,b_3)$ and $b_3 \sim_{L,\varrho} b_0$. By \textbf{(generation)}, we have $Z^w_1(b_3)$, $Z^w(b_3)$, and so, by \textbf{(disjointness)}, $b_0 \ne b_3$. By the first formula in \textbf{(coordination)}, we must have $E_1(b_0,b_3)$.

The same argument as for $b_0$ above gives an $S$-chain $\chw$ starting from $b_3$ and disjoint from $\chv$ that defines uniquely an $\omega$-word $w_{i'_{1}}\dots w_{i'_{N}}\dots$ over $\Gamma$. The chain $\chw$ consists of consecutive segments $\chw_{i'_1},\chw_{i'_2},\dots$ such that $\chw_{i'_1}$ sits in the same $E_1$-equivalence class as $\chw_{i_1}$, $\chw_{i'_2}$ sits in an $E_2$-equivalence class, $\chw_{i'_3}$ in an $E_1$-equivalence class, etc.

By the second formula in \textbf{(coordination)}, we have $i_1 = i'_1$, and so $(v_{i_1},w_{i'_1}) \in \P$; by the third one, the ends of $\chv_{i_1}$ and $\chw_{i_1}$ are $E_2$-related, from which $i_2 = i'_2$ and $(v_{i_2},w_{i'_2}) \in \P$. The ends of $\chv_{i_2}$ and $\chw_{i_2}$ are $E_1$-related, so $i_3 = i'_3$ and $(v_{i_3},w_{i'_3}) \in \P$, etc. 

Finally, let $\chv$ be $S(b_0,c_0),S(c_0,c_1),\dots$ and let $\chw$ be $S(b_3,d_0),S(d_0,d_1),\dots$. Since these $S$-chains are unique and $b_3 \sim_{L,\varrho} b_0$, we must have $c_i \sim_{L,\varrho} d_i$, for all $i < \omega$. It follows that the $\Gamma$-symbols carried by each pair $c_i$ and $d_i$ must coincide, which gives a solution to $\P$.
\end{proof}

Since the $\omega$PCP is undecidable, as an immediate consequence of Theorem~\ref{oPCP} and Lemma~\ref{p:cipvspbdp} we obtain our main result:
\begin{theorem}
The $\FOTTE$-IEP and $\FOTTE$-EDEP are both undecidable.
\end{theorem}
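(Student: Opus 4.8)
The plan is to read off the undecidability of the $\FOTTE$-IEP as an effective reduction from the $\omega$PCP, feeding the formulas $\varphi$ and $\psi$ constructed above into the bisimulation criterion for \emph{non}-existence of interpolants. The assignment $\P \mapsto (\varphi,\psi)$ is computable --- in fact polynomial in the size of $\P$, since every conjunct of $\varphi$ and of $\neg\psi$ is produced by a direct syntactic recursion on the words $v_j,w_j$ --- so it remains only to verify that the map faithfully transports the decision.

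First I would record that $\sig(\varphi)\cap\sig(\psi)=\varrho=\{R,S\}\cup\Gamma$, so that Theorem~\ref{oPCP} and Lemma~\ref{lemcharinterpolation} speak about the same $\varrho$. Here $\sig(\varphi)=\{X_0,X_1,X_2,R,S\}\cup\Gamma$: the predicates $R$ and $S$ and each $A_i\in\Gamma$ genuinely occur in $\varphi$, the occurrence of $S$ and of the $A_i$ being exactly what the otherwise vacuous final conjunct $\exists y\,(S(x,y)\wedge\bigwedge_{A_i\in\Gamma}(A_i(y)\to A_i(y)))$ of $\varphi$ secures. On the other side $\sig(\psi)$ consists of $R$, $S$, the $\Gamma$-symbols, and the auxiliary predicates $P^l_{v_j},P^l_{w_j},Z^u_i,Z^u,Y_1,Y_2$ together with the equivalence predicates $E_1,E_2$. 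The predicates $X_0,X_1,X_2$ do not occur in $\psi$, and none of the auxiliary predicates (nor $E_1,E_2$) occurs in $\varphi$, so the intersection is precisely $\{R,S\}\cup\Gamma$ as claimed; note in particular that $E_1,E_2\notin\varrho$, which is what makes the hidden coding possible.

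With this in place the core of the argument is a two-step chain of equivalences. By Theorem~\ref{oPCP}, $\P$ has a solution iff $\varphi$ and $\neg\psi$ are satisfied in $\FOTTE(\varrho)$-bisimilar pointed structures, i.e.\ iff there exist pointed $\FOTTE(\sigma)$-structures with $\Amf\models\varphi(a_0)$, $\Bmf\models\neg\psi(b_0)$ and $\Amf,a_0\sim_{\FOTTE,\varrho}\Bmf,b_0$. By Lemma~\ref{lemcharinterpolation}, applied with $L=\FOTTE$, the existence of such a pair of structures is exactly the statement that $\varphi$ and $\psi$ have \emph{no} $\FOTTE$-interpolant. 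Composing the two equivalences, $\P$ has a solution iff $\varphi$ and $\psi$ have no $\FOTTE$-interpolant; equivalently, $\P$ has no solution iff $\varphi$ and $\psi$ do have an interpolant. Since the $\omega$PCP, and hence also its complement, is undecidable, the $\FOTTE$-IEP is undecidable. The undecidability of the $\FOTTE$-EDEP then follows at once from Lemma~\ref{p:cipvspbdp}, which reduces the $\FOTTE$-IEP polynomially to the $\FOTTE$-EDEP.

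For this statement there is no real obstacle: the only point that requires care is checking that the $\FOTTE(\varrho)$-bisimilarity asserted in Theorem~\ref{oPCP} is literally the relation $\sim_{\FOTTE,\varrho}$ appearing in Lemma~\ref{lemcharinterpolation} --- including that $a_0\mapsto b_0$ is a partial $\varrho$-isomorphism --- which holds directly by the relevant definitions. All the genuine difficulty lies in Theorem~\ref{oPCP} itself, namely the encoding of an $\omega$PCP solution into a single pair of $\FOTTE(\varrho)$-bisimilar structures by means of the two equivalence relations $E_1,E_2$; that theorem we invoke here as already established.
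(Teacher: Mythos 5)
Your proposal is correct and follows exactly the paper's own route: the paper likewise obtains the theorem as an immediate consequence of Theorem~\ref{oPCP}, Lemma~\ref{lemcharinterpolation}, the undecidability of the $\omega$PCP, and Lemma~\ref{p:cipvspbdp} for the EDEP part. Your additional checks (computability of the map $\P\mapsto(\varphi,\psi)$, the identification of $\sig(\varphi)\cap\sig(\psi)$ with $\varrho$, and the passage to the complement of the $\omega$PCP) are all sound and merely make explicit what the paper leaves implicit.
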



\section{Undecidability of the IEP for Guarded and Description Logics}\label{other}

It is not hard to tweak the construction in the previous section to show  that the $\GFTTEc$-IEP is undecidable. Observe that the formula $\psi$ defined above can be equivalently represented as a formula in $\GFTTEc$ and that the $\FOTTE(\varrho)$-bisimulation between $\Amf$ and $\Bmf$ is actually a $\GFTTEc(\varrho)$-bisimulation (in this case, it is actually enough to have one copy of the $R$-cycle with an $S$-chain in $\Amf$). The only unguarded conjunct of $\varphi$ is $\forall x,y\, \big((X_{0}(x) \wedge X_{0}(y)) \rightarrow x=y \big)$, which is needed to generate an $R$-cycle with three points. The same result can be achieved using a guarded $\varphi'$ with two individual constants $c_1$ and $c_2$ as follows:
$$
\varphi'(x) = R(x,c_{1}) \wedge R(c_{1},c_{2}) \wedge R(c_{2},x) \wedge \exists y \, \big[ S(x,y) \wedge \bigwedge_{A_i\in \Gamma} \big( A_i(y) \to A_i(y) \big)\big].
$$
We then obtain Theorem~\ref{oPCP} with $\varphi'$ and $\GFTTEc$ in place of $\varphi$ and $\FOTTE$, respectively, and, as an immediate consequence, the following:

\begin{theorem}
The $\GFTTEc$-IEP and $\GFTTEc$-EDEP are both undecidable. 
\end{theorem}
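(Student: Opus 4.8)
The plan is to reduce the $\omega$PCP to the complement of the $\GFTTEc$-IEP, reusing almost verbatim the construction and argument behind Theorem~\ref{oPCP}. Concretely, for a given $\omega$PCP instance $\P$ I would keep $\psi$ exactly as defined in Section~\ref{undec} and replace $\varphi$ by the guarded formula $\varphi'$ displayed above. The first thing to verify is that both are legitimate $\GFTTEc$-formulas: every quantifier in $\psi$ is already relativised by an atom ($S(x,\bar x)$ in the $\alpha_{i,l}^j,\beta_{i,l}^j$; $R(x,y)$ in the \textbf{(generation)} conjuncts and the first \textbf{(coordination)} conjunct; and $E_i(x,y)$, $Z^v_i(x)$, $P(x)$ in the remaining ones), so $\neg\psi$ lies in $\GFTTEc$; and $\varphi'$ is guarded by construction, with the three cycle-atoms $R(x,c_1),R(c_1,c_2),R(c_2,x)$ now using the two constants $c_1,c_2$ in place of the single unguarded conjunct $\forall x,y\,((X_0(x)\wedge X_0(y))\to x=y)$ that forced the $R$-cycle before. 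I would also recheck that the shared signature is unchanged, $\varrho=\sig(\varphi')\cap\sig(\psi)=\{R,S\}\cup\Gamma$: the new constants $c_1,c_2$ do not occur in $\psi$ and the equivalence predicates $E_1,E_2$ do not occur in $\varphi'$, so neither enters $\varrho$.

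The core step is to re-prove Theorem~\ref{oPCP} with $\varphi'$ and $\GFTTEc$ in place of $\varphi$ and $\FOTTE$. For the $(\Rightarrow)$ direction I would take the structures of Fig.~\ref{intended} but with a single copy of the $3$-point $R$-cycle-plus-$S$-chain in $\Amf$, interpreting $c_1,c_2$ as the two non-root cycle points (and arbitrarily in $\Bmf$, which is harmless since $c_1,c_2\notin\varrho$). Then $\Amf\models\varphi'(a_0)$ and $\Bmf\models\neg\psi(b_0)$ hold as before, and the relation $\bis$ witnessing $\FOTTE(\varrho)$-bisimilarity is, a fortiori, a $\GFTTEc(\varrho)$-bisimulation: the guarded back-and-forth only has to match $\varrho$-neighbours (along $R$ or $S$) and the reflexive point, which is a weaker demand, and the constant clause of the $\GFTTEc$-bisimulation is vacuous because $\varrho$ contains no constants. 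This weaker matching requirement is exactly why one $S$-chain now suffices where the $\FOTTE$ argument needed two: there is no obligation to match far-apart points by a single partial isomorphism.

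For the $(\Leftarrow)$ direction I would run the extraction argument of Theorem~\ref{oPCP} essentially unchanged. Given $\GFTTEc(\varrho)$-bisimilar $\Amf,a_0\models\varphi'(a_0)$ and $\Bmf,b_0\models\neg\psi(b_0)$, the cycle-atoms of $\varphi'$ together with $R\in\varrho$ let the guarded back-and-forth transfer the $3$-cycle into $\Bmf$, yielding $b_1,b_2,b_3$ with $R(b_0,b_1),R(b_1,b_2),R(b_2,b_3)$ and $a_0\sim_{\GFTTEc,\varrho}b_3$; then \textbf{(generation)}, \textbf{(disjointness)} and \textbf{(uniqueness)} produce the unique $S$-chains $\chv$ (from $b_0$) and $\chw$ (from $b_3$), and \textbf{(coordination)} forces every pair $(v_{i_j},w_{i_j})$ into $\P$, precisely as in the $\FOTTE$ proof. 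Finally, since $S\in\varrho$, bisimilarity propagates along the two unique $S$-chains, giving $c_i\sim_{\GFTTEc,\varrho}d_i$ for all $i<\omega$ and hence coincidence of the generated $\omega$-words, i.e.\ a solution to $\P$.

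With the guarded analogue of Theorem~\ref{oPCP} in hand, the conclusion is immediate: Lemma~\ref{lemcharinterpolation}, which is stated for $\GFTTEc$, turns ``$\varphi'$ and $\neg\psi$ hold in $\GFTTEc(\varrho)$-bisimilar pointed structures'' into ``$\varphi'$ and $\psi$ have no $\GFTTEc$-interpolant'', so $\P$ has a solution iff no interpolant exists; undecidability of the $\omega$PCP then gives undecidability of the $\GFTTEc$-IEP, and Lemma~\ref{p:cipvspbdp} (instantiated at $L=\GFTTEc$) transfers this to the $\GFTTEc$-EDEP. I expect the only genuinely delicate point to be the two uses of the weakened guarded bisimulation: confirming in the $(\Rightarrow)$ direction that dropping to one $S$-chain preserves globality and the back-and-forth along $R,S$ (noting that $E_1,E_2\notin\varrho$, so the equivalence-class structure of $\Bmf$ is invisible to the $\varrho$-isomorphisms), and in the $(\Leftarrow)$ direction that guarded bisimilarity still transfers the $R$-cycle and propagates unchanged along the $S$-chains. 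Since $R,S\in\varrho$, both checks are routine rather than obstructive.
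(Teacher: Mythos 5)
Your proposal is correct and follows essentially the same route as the paper: replace the single unguarded conjunct of $\varphi$ by the constant-guarded $\varphi'$, observe that $\psi$ is already (expressible as) a $\GFTTEc$-formula and that the bisimulation of Theorem~\ref{oPCP} is a $\GFTTEc(\varrho)$-bisimulation (with one copy of the $R$-cycle-plus-$S$-chain sufficing in $\Amf$), and conclude via Lemmas~\ref{lemcharinterpolation} and~\ref{p:cipvspbdp}. The paper leaves these verifications as a sketch; your write-up just fills in the same details.
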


These undecidability results can be interpreted as results about description logics with Boolean operators on roles that correspond to the two-variable fragments~\cite{DBLP:conf/dlog/LutzSW01,DBLP:conf/csl/LutzSW01}. The crucial operators required are role intersection, negation, and the identity role. Denote by $\mathcal{ALC}^{\cap,\neg,\text{id}}$2E the extension of the basic description logic $\mathcal{ALC}$ with the operators $\cap$ and $\neg$ on roles, the identity role id interpreted as $\{(a,a) \mid a\in \text{dom}(\Amf)\}$ in any structure $\Amf$, and two equivalence relations. The \emph{interpolant existence problem} for $\ALCTE$ (or $\ALCTE$-IEP, for short) is the problem to decide, given  $\ALCTE$-concepts $C_{1}$ and $C_{2}$, whether there exists an $\ALCTE$-concept $C$ built from the shared concept and role names from $C_{1}$ and $C_{2}$ such that $C_{1}\sqsubseteq C$ and $C \sqsubseteq C_{2}$ are valid concept inclusions. We then obtain:

\begin{theorem}
The $\ALCTE$-IEP and $\ALCTE$-EDEP are both undecidable. 
\end{theorem}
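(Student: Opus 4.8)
The plan is to transfer the undecidability of the $\FOTTE$-IEP (established in Theorem~\ref{oPCP}) to $\ALCTE$ through the standard correspondence between $\mathcal{ALC}$ enriched with Boolean operators on roles and the two-variable fragment. Concretely, I would fix the signature-preserving translation $\cdot^\ast$ that sends an $\ALCTE$-concept $C$ to a one-free-variable $\FOTTE$-formula $C^\ast(x)$: concept names become unary predicates, role names become binary predicates, the two equivalence roles become $E_1,E_2$, and the concept constructors translate as usual ($\exists R.D \mapsto \exists y\,(R(x,y)\wedge D^\ast(y))$, and so on), with complex roles translating into Boolean combinations of binary atoms and equality. Under $\cdot^\ast$ one has $C_1\sqsubseteq C_2$ valid iff $C_1^\ast(x)\models C_2^\ast(x)$, and the shared concept and role names of $C_1,C_2$ are exactly the shared signature $\varrho$ of $C_1^\ast,C_2^\ast$. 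Hence an $\ALCTE(\varrho)$-interpolant for $C_1,C_2$ yields an $\FOTTE(\varrho)$-interpolant for $C_1^\ast,C_2^\ast$ via $\cdot^\ast$.

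The second step is to realize the reduction formulas as concepts. I would exhibit $\ALCTE$-concepts $C_\varphi$ and $C_\psi$ whose translations are equivalent to $\varphi$ and $\psi$. This is where the three ``crucial'' operators enter: role intersection expresses the conjunctions $S\cap E_i$ and $R\cap E_i$ occurring under the quantifiers of $\psi$; role negation together with the derived role union gives the universal role $U=R\sqcup\neg R$; and the identity role $\textit{id}$ together with $U$ expresses the only unguarded conjunct of $\varphi$, namely $\forall x,y\,((X_0(x)\wedge X_0(y))\to x=y)$, as the concept $\forall U.\big(X_0\to\neg\exists(U\cap\neg\,\textit{id}).X_0\big)$, so that no nominals or constants are needed. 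Every global $\forall$-conjunct of $\varphi$ and $\psi$ is then placed under $\forall U.(\cdots)$ at the root, and every existential/universal restriction over $R,S,E_1,E_2$ and their Boolean combinations is directly available.

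Finally, I would re-run Theorem~\ref{oPCP} with $\ALCTE$ in place of $\FOTTE$. For this I first record the $\ALCTE$-analogue of Lemma~\ref{lemcharinterpolation}: $C_\varphi,C_\psi$ have no $\ALCTE$-interpolant iff $C_\varphi$ and $\neg C_\psi$ are satisfied in $\ALCTE(\varrho)$-bisimilar pointed structures. The $(\Rightarrow)$ direction reuses the models of Fig.~\ref{intended}; as in the guarded case it in fact suffices to keep a single copy of the three-point $R$-cycle followed by the $S$-chain, since the $\ALCTE(\varrho)$-bisimulation does not have to match the inverse directions of $R$ and $S$. The $(\Leftarrow)$ direction carries over essentially verbatim, because the extraction of a solution in the proof of Theorem~\ref{oPCP} only navigates $R$- and $S$-edges in the forward direction and appeals to $E_1,E_2$-uniqueness, all of which an $\ALCTE(\varrho)$-bisimulation respects. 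Undecidability of the $\ALCTE$-EDEP then follows from that of the $\ALCTE$-IEP by the $\ALCTE$-analogue of Lemma~\ref{p:cipvspbdp}.

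The main obstacle I anticipate is precisely the gap between $\ALCTE$ and the full one-free-variable $\FOTTE$: since $\ALCTE$ lacks inverse roles, it cannot express, e.g., ``$x$ has an $R$-predecessor'', so $\ALCTE(\varrho)$-bisimulation is strictly coarser than $\FOTTE(\varrho)$-bisimulation, and one cannot simply quote expressive equivalence to transport an interpolant back from $\FOTTE$ to $\ALCTE$. The real work is therefore to verify that the reduction is insensitive to inverse directions — that the witnessing structures remain bisimilar in the weaker description-logic sense and that no step of the solution extraction ever inspects predecessors — so that the answer to ``does an interpolant exist'' is the same for $C_\varphi,C_\psi$ as for $\varphi,\psi$.
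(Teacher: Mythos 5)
Your proposal is correct and follows essentially the same route as the paper, which simply invokes the correspondence between $\ALCTE$ and $\FOTTE$ without further detail: you translate $\varphi$ and $\psi$ into concepts using role intersection, the derived universal role, and $\textit{id}$ (to eliminate the unguarded equality conjunct), and re-run the bisimulation argument of Theorem~\ref{oPCP}. Your additional care about the mismatch between forward-only $\ALCTE(\varrho)$-bisimulations and the two-way partial isomorphisms of $\FOTTE(\varrho)$-bisimulations — and the check that the solution-extraction argument never inspects predecessors — is exactly the detail the paper leaves implicit, and you resolve it correctly.
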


To obtain a DL version of the undecidability for the $\GFTTEc$-IEP, we can add nominals and the universal role to $\ALCTE$ and, to reflect guarded quantification, admit only Boolean combinations of roles that contain at least one positive occurrence of a role name (different from the universal role). 


\section{Open Problems}\label{open}

This paper presents first examples of decidable fragments of first-order logic, for which the interpolant existence problem and the explicit definition existence problem are undecidable. Numerous questions remain open, including the following:
\begin{itemize}
\item Is the IEP decidable for $\FOTOE$, that is, $\FOT$ with one equivalence relation? Note that the decidability and $\coNExpTime$-completeness proofs for $\FOTOE$ are significantly less involved than those for $\FOTTE$~\cite{Pratt23book}. What happens if we extend $\FOT$ with a single transitive (rather than equivalence) relation?

\item Is the $\FOTTE$-IEP still undecidable if one drops equality from $\FOT$? Note that, for $\FOT$ without equivalence relations and without equality, the IEP is decidable and the known complexity bounds are the same as for the case with equality~\cite{DBLP:conf/kr/KuruczWZ23}.

\item Is the IEP decidable for $\mathcal{ALCQIO}$? What about $\CT$, that is, $\FOT$ with counting?    
\end{itemize}



\providecommand{\noopsort}[1]{}

\end{document}